\newtheorem{lemma}{Lemma}
\begin{document}

\conferenceinfo{KDD}{'15 Sydney, Australia}

\title{Virus Propagation in Multiple Profile Networks} 


%
%
%
%
%
\numberofauthors{4} 
%
\author{
%
%
\alignauthor
A. Rapti\\
       \affaddr{University of Patras}\\
       \affaddr{Rio}\\
       \affaddr{Patras, Greece}\\
       \email{arapti@ceid.upatras.gr}
\alignauthor
S. Sioutas \\ 
       \affaddr{Ionian University}\\
       \affaddr{Corfu}\\
       \affaddr{Corfu, Greece}\\
       \email{sioutas@ionio.gr}
\and  
\alignauthor K. Tsichlas\\
       \affaddr{Aristotle University of Thessaloniki}\\
       \affaddr{Thessaloniki}\\
       \affaddr{Greece}\\
       \email{tsichlas@csd.auth.gr}
\alignauthor G. Tzimas\\
       \affaddr{Technological Educational Institute of Western Greece}\\
       \affaddr{Nafpaktos}\\
       \affaddr{Greece}\\
       \email{tzimas@cti.gr}
}

\maketitle
\begin{abstract}

Suppose we have a virus or one competing idea/product that propagates over a multiple profile (e.g., social) network. Can we predict what proportion of the network will actually get "infected" (e.g., spread the idea or buy the competing product), when the nodes of the network appear to have different sensitivity based on their profile? For example, if there are two profiles $\mathcal{A}$ and $\mathcal{B}$ in a network and the nodes of profile $\mathcal{A}$ and profile $\mathcal{B}$ are susceptible to a highly spreading virus with probabilities $\beta_{\mathcal{A}}$ and $\beta_{\mathcal{B}}$ respectively, what percentage of both profiles will actually get infected from the virus at the end? To reverse the question, what are the necessary conditions so that a predefined percentage of the network is infected? We assume that nodes of different profiles can infect one another and we prove that under realistic conditions, apart from the weak profile (great sensitivity), the stronger profile (low sensitivity) will get infected as well. First, we focus on cliques with the goal to provide exact theoretical results as well as to get some intuition as to how a virus affects such a multiple profile network. Then, we move to the theoretical analysis of arbitrary networks. We provide bounds on certain properties of the network based on the probabilities of infection of each node in it when it reaches the steady state. Finally, we provide extensive experimental results that verify our theoretical results and at the same time provide more insight on the problem.
\end{abstract}

\category{H.2.8}{Database management}{Database applications - Data mining}

\keywords{Epidemics, Virus Propagation, Profiles}

\section{Introduction}

Suppose a new game has been published for PS4. The fans of PS4 will start posting tweets on Twitter about this new game and their followers/friends will retweet in turn, provided that they are also fans of PS4. Followers/agents that are not interested in PS4 will be neutral towards this tweet while followers/agents that are fans of XBOX One will be rather indifferent, if not hostile, and will not retweet, at least not to this extent. It is natural to expect that this information about this game will propagate through most of the PS4 fans, but will have a smaller degree of propagation to other groups of agents. This simple example shows a fundamental truth in information propagation over social networks: that not all agents have the same affinity for some particular piece of information. This is true in most of the settings where an idea/rumor/virus propagates subject to particular restrictions over the nodes of a network. This paper is a first attempt (to the best knowledge of the authors) to incorporate the affinity/susceptibility of agents towards a particular idea/rumor/virus under a specific propagation model. 

The general setting considered in this paper consists of a (or many) virus that try to infect as many nodes as possible in a given network. The rules of the game are dictated by the SIS model, which is an epidemiological model where each node can be in any of the two states: susceptible to infection or infected. As implied by our previous discussion, we use this epidemiological model in the general setting of information diffusion, meaning that a virus may as well correspond to a piece of information diffused on a social network subject to the rules posed by the SIS model.

There are many interesting questions arising in this setting. How will a virus propagate over a given network? Can we determine whether all members of the network will be infected or will it spread in a small group in the network and then die out quickly? Similarly, when can we say that such an infectious virus will take off? What are the necessary conditions for the virus to flood the network? What are the necessary conditions related to the characteristics of the network (e.g., mean degree) for a particular virus to infect most (or some percentage of the network)? Finally, what is the case when the nodes have different endurance/sensitivity to the "virus" and have temporary or permanent immunity?

\begin{table}[tb]
\begin{center}
\caption{Theoretical results presented in this paper. The maximum eigenvalue of a square matrix is represented by $\rho$.} \label{tab:res}
\begin{tabular}{|c | c |}
\hline
{\bf Fixed Point} & {\bf Condition}  \\ \hhline{|=|=|}
\textbf{Clique} & \\ \hline
$I_{\mathcal{A}},I_{\mathcal{B}} \rightarrow 0$ & $N\frac{\delta_{\mathcal{A}}\beta_{\mathcal{B}}+\delta_{\mathcal{B}}\beta_{\mathcal{A}}}{\delta_{\mathcal{A}}\delta_{\mathcal{B}}}<1$ \\ \hline
$I_{\mathcal{A}},I_{\mathcal{B}} = cN, c\in\left(\frac{1}{2},1\right)$ & $\beta_{\mathcal{A}}=\frac{\delta_{\mathcal{A}}}{2N(1-c)}$ \\ & $\beta_{\mathcal{B}}=\frac{\delta_{\mathcal{B}}}{2N(1-c)}$ \\ \hline
$I_{\mathcal{A}}=(1-c)N,I_{\mathcal{B}} = cN, c\in(0,1)$ & $\beta_{\mathcal{A}}=\frac{\delta_{\mathcal{A}}}{N}\frac{(1-c)}{c}$ \\ & $\beta_{\mathcal{B}}=\frac{\delta_{\mathcal{B}}}{N}\frac{c}{1-c}$ \\ \hline 
\textbf{Arbitrary Graph} &\\ \hline
$f=\hat{0}$ & $\rho(B\cdot A-\Delta)<0$ \\ \hline 
$f=$ perron eigenvector of & \\
$\Delta^{-1}BQA$ with eigenvalue $1$ & $\rho(B\cdot A-\Delta)>0$ \\ 
\hhline{|=|=|}\hhline{|=|=|}
$\forall i: f_i\in [a,b], 0<a<b<1$ & 
$\frac{a}{b(1-a)} < \frac{\beta_{\mathcal{A}}}{\delta_{\mathcal{A}}} d(i) < \frac{b}{a(1-b)}$ \\ &
$\frac{a}{b(1-a)} < \frac{\beta_{\mathcal{B}}}{\delta_{\mathcal{B}}} d(i) < \frac{b}{a(1-b)}$ 
\\ \hline
$\forall i \in \mathcal{B}: f_i\in [a,b], 0<a<b<1$ &  $\frac{\delta_{\mathcal{A}}}{\beta_{\mathcal{A}}}  \frac{a}{b(x-a)} < \frac{1}{x}d_{\mathcal{A}}(i)+$\\ &$+d_{\mathcal{B}}(i) < \frac{\delta_{\mathcal{A}}}{\beta_{\mathcal{A}}}  \frac{b}{a(x-b)}$

\\
$\forall i\in \mathcal{A}: f_i\in [\frac{a}{x},\frac{b}{x}], x>1$ & 
$\frac{\delta_{\mathcal{B}}}{\beta_{\mathcal{B}}}  \frac{a}{b(1-a)}<\frac{1}{x}d_{\mathcal{A}}(i)+$ \\& $+d_{\mathcal{B}}(i) < \frac{\delta_{\mathcal{B}}}{\beta_{\mathcal{B}}}  \frac{b}{a(1-b)}$
\\ \hline
\end{tabular}
\end{center}
\end{table}

Our basic assumption and innovation when compared to all previous approaches is that there is no fair-play and nodes have different profiles against the virus.  That is, the network is heterogeneous with respect to the virus, which means that nodes have different sensitivity to it. This is one of our main contributions in comparison with previous results where all nodes appear to have the same behavior towards the virus and the same model parameters \cite{Beutel:2012:IVN:2339530.2339601,Prakash:2012:WTC:2187836.2187975}. The propagation model which is followed, resembles the SIS (no immunity like flu) model where nodes are either susceptible or infected but with modifications. All nodes can get infected from one another, despite the difference of their profiles.


Our main contribution is that we provide answers for some of the questions above, for special topologies as the clique and arbitrary graphs of high or low connectivity. To the best of our knowledge, we are the first to provide theoretical and experimental findings on the propagation of a virus over a heterogeneous network.  We prove that in the case of two profiles, if one profile has high sensitivity to the virus and the other one has low sensitivity then nodes from both profiles will get infected in the case where the network is a clique. For arbitrary networks, we prove necessary conditions for the virus to die out allowing for multiple profiles (not just two). In the case of two profiles, we connect the degree of the nodes of the network to the footprint of the virus. The problem has many applications in the field of viral marketing, medicine, ecology, etc.. In Table~\ref{tab:res} we provide an overview of our theoretical findings. 

The outline of the paper is as follows: we provide a review of recent related work in Section \ref{related} and we give a detailed formulation of our model in Section \ref{problem_form}. Then, we present the analysis for every fixed point and the proof for our model in Section \ref{analysis}. Next, in section \ref{experiments} we verify our results with simulation experiments. Finally, we discuss issues and further extensions of our model and conclude in Sections~\ref{sec:discuss} and \ref{conclusion} respectively.

\section{Related Work} \label{related}
We provide an overview of related work on the field of epidemiology and information propagation. Known propagation models are the SIS model (no immunity), the SIR (lifetime immunity), SIRS (temporary immunity) etc. A survey for the available models in the field can be found in \cite{Hethcote:2000:MID:364897.364915}. One of the main topics in the field, is the epidemic threshold because of its role in the spread of an epidemic. It provides the conditions and appropriate parameters for a virus to reach the limit, which when crossed results in an epidemic. Earlier work in virus propagation models, include \textit{homogeneous} models where every member of the group has equal contact to others in the population \cite{McKendrick:1926,anderson:humaninf,bailey:infectious}. A lot of research focused on graphs of specific type such as random graphs, power-law graphs \cite{kephart:virus,pastor} and other.  However, the authors in \cite{ganesh} prove that for arbitrary topologies in the SIS model, the epidemic threshold depends on the leading eigenvalue of the adjacency matrix of the graph.

 The majority of the presented models have studied a single epidemic in a single topology whereas in later work \cite{Prakash:2012:WTC:2187836.2187975}, multiple virus models are introduced. All of them require that the network is fair-play, which means that all nodes have exactly the same behavior towards the viruses. In \cite{Prakash:2012:WTC:2187836.2187975}, an $SI_{1}I_{2}S$ model was used with two competing viruses that infect nodes of arbitrary topologies  where the nodes are mutually immune. The main result is that for any topology the stronger virus (above threshold) survives and wipes out the weaker one ("winner takes all"). In later work, the condition for mutual immunity is removed and the focus is on the conditions where nodes are infected from both viruses \cite{Beutel:2012:IVN:2339530.2339601}. In both cases, the model used is SIS-like whereas in \cite{Prakash:2011:TCA:2117684.2118312} the authors provide a generalized model that includes the majority of known epidemiological models with appropriate parameter definition in discrete time. All the aforementioned models, are applied to simple networks whereas recent work \cite{competingmemes}, refers to competing viruses in a composite network and conjectures that the stronger one will prevail over the weaker. Once again, the problem is formulated by a non-linear dynamical system.

Another section of related work is in the field of information diffusion. This includes apart from biological viruses, "viruses" with social, economic and market content, such as memes, competing products, rumors. As a result, research focus on information cascades, viral marketing, and competing product penetration \cite{domingos}. The models presented for information cascade, are divided in two classes, the Independent Cascade \cite{DBLP:conf:kdd:KempeKT03} and the Linear Threshold (LT) \cite{granovet}. An interesting approach is presented in \cite{kimura} where the authors construct a layered graph and apply bond percolation with a pruning strategy in order to efficiently estimate the influence function for the SIS model in social networks while in \cite{saito} the authors try to discover influential nodes in such networks. In the field of IC and LT, an important extension can be seen in \cite{barbieri}. Here the authors present a topic-aware extension of these well-know models and they propose a strategy for modelling authoritativeness, influence and relevance from such a perspective, a topic-aware perspective. 

Ecology, is another interesting application field for such propagation models. The principle of \textit{competitive exclusion} states that two species that compete for the exact same resources cannot stably coexist which resembles the aforementioned phenomenon of "winner takes all". This principle has been studied intensively using propagation models such as SIS, SIRS, Lotka-Volterra etc. but there has been no analytical solution presented so far \cite{competition},\cite{principle}, \cite{lotka} .

\section{Problem Formulation} \label{problem_form}

\begin{table}[tb]
\begin{center}
\caption{Frequently used symbols.}
\label{symbols}
\begin{tabular}{|p{0.16\columnwidth}||p{0.72\columnwidth}|}
\hline
{\bf Symbol} & {\bf Interpretation} \\ \hline\hline
$G=(V,E)$ & the graph $G$ with set of nodes $V$ and set of edges $E$ \\ \hline
$N$ & number of nodes in each profile or total number of nodes \\ \hline
$\mathcal{A}$, $\mathcal{B}$ & two sets of nodes that correspond to different profiles \\ \hline
$\beta_{\mathcal{A}}$ & infection rate for profile $\mathcal{A}$\\ \hline
$\delta_{\mathcal{A}}$ & healing rate for profile $\mathcal{A}$ \\ \hline
$p_{i,\mathcal{A}}$ & probability that node $i$ in profile $\mathcal{A}$ is infected \\ \hline
$p_{i}$ & probability that node $i$ is infected (profile is indifferent) \\ \hline
$d(i)$, $d_{\mathcal{A}}(i)$ & degree of node $i$ and degree of node $i$ w.r.t. profile $\mathcal{A}$ nodes only, respectively. \\ 
\hline
$I_{\mathcal{A}}$ & number of nodes infected in profile $\mathcal{A}$\\ \hline
$\mathcal{J}$ & the jacobian of the dynamical system\\ \hline
$A$ & adjacency matrix of the underlying graph\\ \hline
$\Delta$ & a diagonal matrix containing the healing rates ($\delta$) for each node (depends on the profile)\\ \hline
$B$ & a diagonal matrix containing the attack rates ($\beta$) for each node (depends on the profile)\\ \hline
$p$ & a state column vector with probabilities of infection ($p_i$) for each node \\ \hline
$Q$ & a diagonal matrix containing the probabilities ($1-p_i$) that nodes are not infected \\ \hline 
$\lambda$ & the eigenvalue of the corresponding matrix\\ \hline
$I$ & identity matrix of appropriate size\\ \hline
$\hat{x}$ & column vector full of $x$'s\\ \hline
\end{tabular}
\end{center}
\end{table}

We assume a SIS propagation model~\cite{Hethcote:2000:MID:364897.364915} that is applied on a heterogeneous network. That is, we assume that there is no fair game using the terminology of \cite{Prakash:2012:WTC:2187836.2187975,Beutel:2012:IVN:2339530.2339601}. 
Since this is the first theoretical treatment of heterogeneous environments for virus propagation we choose to work in the simple model of SIS. Even in this model, the analysis is quite complex. The heterogeneity of the network is realized by the existence of different profile agents with respect to the virus. That is, the virus has different behavior based on the profile of the agent. 

This is completely different when compared to the existence of multiple viruses over a network (e.g., \cite{Beutel:2012:IVN:2339530.2339601}). In the case of multiple viruses, the interesting issues are related to how the footprints of these viruses reaches an equilibrium subject to various rules of interaction. In our case, we assume a single virus which affects a network of multiple profiles. Until now there was no treatment (e.g., \cite{Prakash:2011:TCA:2117684.2118312}) of heterogeneous networks. To simplify the analysis we focus on a simple model (SIS) and we also assume the existence of two profiles, profile $\mathcal{A}$ and profile $\mathcal{B}$ - although in the general case we can allow for multiple profiles.

The following parameters define our problem:

\noindent \textbf{Healing Rate:} It is the death rate $\delta_{\mathcal{A}}$ of the virus in profile $\mathcal{A}$. This means that when an agent (a node in the network) is infected then the time taken to heal is exponentially distributed with respect to the parameter $\delta_{\mathcal{A}}$. Intuitively, a high $\delta$ value means that the time taken to heal is small while for low $\delta$ the time taken to heal is high and the virus persists. For example, imagine a rumor about PS3 in a social network where there are groups of PS4 fans (low $\delta$), groups of XBOX One fans (very high $\delta$) and other groups that have relatively high $\delta$.

\noindent \textbf{Attack Rate:} It is the infection rate of the virus. This depends on the endurance of the agent, which defines its profile with respect to the virus. An agent belonging to profile $\mathcal{A}$ has endurance $\beta_{\mathcal{A}}$ towards the virus, or alternatively, the virus has attack rate $\beta_{\mathcal{A}}$ towards agents in profile $\mathcal{A}$. In our game console setting, a high $\beta$ means that the agent is susceptible to the rumor about PS4 while a low $\beta$ means that the agent does not care about it since he may be a fan of a competitive product (XBOX One).

\subsection{Formal Problem Statement} 

We assume an undirected connected graph $G=(V,E)$, where the set of nodes $V$ corresponds to agents and the set of edges $E$ corresponds to an established relation through which the virus can infect other nodes. 

The general statement of the problem we wish to solve is the following: \emph{Given a network $G=(V,E)$ and the SIS parameters 
$\{\beta_{\mathcal{A}_1},\beta_{\mathcal{A}_2},\ldots\,\beta_{\mathcal{A}_k}\}$ and 
$\{\delta_{\mathcal{A}_1},\delta_{\mathcal{A}_2},\ldots\,\delta_{\mathcal{A}_k}\}$ 
for a set of profiles 
$\{\mathcal{A}_1,\mathcal{A}_2,\ldots\,\mathcal{A}_k\}$, 
determine the conditions under which the virus reaches a particular equilibrium state.}

In the following, most of the times we focus on the case of two profiles, in which case we name the second one $\mathcal{A}_2$ as $\mathcal{B}$.

\section{Analysis - Proofs} \label{analysis}

We first analyze the case where the graph $G$ is a clique of size $|V|=2N$. The reasons for this choice are twofold. First, we provide exact theoretical results. Second, we acquire intuition as to how the virus propagates on heterogeneous networks. Finally, we proceed with the analysis of arbitrary graphs.

\subsection{Proof Roadmap}
The proof consists of the following steps:

$1.$ \textbf{Dynamical System:} We construct a dynamical system of differential equations to approximate the virus propagation model and the process followed. In particular, the dynamical system is of the form
$ x'=F(x)$, where $x'$ is the component-wise derivative of $x$ and $F$: $\Re \rightarrow \Re$ is continuous and differentiable.

$2.$ \textbf{Fixed Points:}  We want to find the possible points where the dynamical system is in equilibrium and does not change state. Every point $\vec{x}$ ($\vec{x}$ is a vector) where $F(\vec{x})=0$ is considered to be such a point, called a \textit{fixed point}. There are several fixed points of the system, depending on the spread of the virus and the sensitivity/endurance level of the agents. However, we choose to present those that give better insight of the  system behavior.

$3.$ \textbf{Stability conditions:} We focus on the conditions required for each fixed point to be stable so that a possible perturbation will not push the system away from the equilibrium point. We are interested in the stability conditions of three scenarios:

$a.$ HIGH: Both profiles in the network have great endurance against the virus and tend to stay susceptible.

$b.$ LOW: Both profiles have equally low endurance against the virus and all agents get infected.

$c.$ MIXED: One profile has low endurance against the virus and the corresponding proportion of the  network tends to get infected while the other profile has high endurance and the corresponding agents tend to stay susceptible.
\\

In conclusion, after constructing a suitable dynamical system to describe the propagation process we continue with the analysis of the possible fixed points. Intuitively, one would require for each fixed point to be a stable attractor and not lead the system far away from the equilibrium point because of opposing forces. More specifically, we require each fixed point to be a hyperbolic fixed point, in which case the eigenvalues of the corresponding Jacobian matrix will not have a zero real part \cite{hirsch:differential}. With this requirement, a hyperbolic fixed point is stable when the eigenvalues have a negative real part and hence all the eigenvalues of the corresponding Jacobian matrix should satisfy this condition.

\subsection{Clique}

We start by analyzing the case of a clique with $2N$ nodes. We assume two profiles, $\mathcal{A}$ and $\mathcal{B}$ with $N$ nodes each.
Let $I_{\mathcal{A}}$ and $I_{\mathcal{B}}$ \footnote{These are time-dependent variables but for brevity we do not write them as such ($I_{\mathcal{A}}(t)$).} be the number of infected nodes. The following differential equations describe the evolution of the system:

\begin{equation}\label{eq:A}
\frac{dI_{\mathcal{A}}}{dt}=\beta_{\mathcal{A}}(N-I_{\mathcal{A}})(I_{\mathcal{A}}+I_{\mathcal{B}})-\delta_{\mathcal{A}} I_{\mathcal{A}}
\end{equation} 

\begin{equation}\label{eq:B}
\frac{dI_{\mathcal{B}}}{dt}=\beta_{\mathcal{B}}(N-I_{\mathcal{B}})(I_{\mathcal{B}}+I_{\mathcal{A}})-\delta_{\mathcal{B}} I_{\mathcal{B}}
\end{equation} 

Indeed, the change of $I_{\mathcal{A}}$ (similarly for $I_{\mathcal{B}}$) is equal to the new infected nodes of profile $\mathcal{A}$ due to the already infected nodes in both profiles (recall that because the graph is a clique all nodes connect to all other nodes) minus the already infected nodes of $I_{\mathcal{A}}$ that heal with probability $\delta_{\mathcal{A}}$.

These equations constitute a non-linear dynamical system and its Jacobian is:

\[
\mathcal{J}(I_{\mathcal{A}},I_{\mathcal{B}}) =
\left[ \begin{array}{cc}
\beta_{\mathcal{A}}(N-2I_{\mathcal{A}}-I_{\mathcal{B}})-\delta_{\mathcal{A}} 
& 
\beta_{\mathcal{A}}(N-I_{\mathcal{A}}) \\
\beta_{\mathcal{B}}(N-I_{\mathcal{B}}) 
& 
\beta_{\mathcal{B}}(N-2I_{\mathcal{B}}-I_{\mathcal{A}})-\delta_{\mathcal{B}}
\end{array} \right]
\]

We are only interested in the hyperbolic fixed points of this system since a hyperbolic equilibrium point is topologically equivalent to the orbit structure of the linearized dynamical system. We identify four interesting fixed points of the dynamical system:
\begin{enumerate}
 \item $I_{\mathcal{A}},I_{\mathcal{B}} \rightarrow 0$ where the virus dies out in both profiles.
 \item $I_{\mathcal{A}},I_{\mathcal{B}} \rightarrow N$ where the virus infects all nodes in both profiles.
 \item $I_{\mathcal{A}}\rightarrow 0, I_{\mathcal{B}} \rightarrow N$ where the virus infects one profile ($\mathcal{B}$ in this case) and leaves the other unaffected. There are two such fixed points that hold symmetrically for both profiles and they are the most interesting in this setting.
\end{enumerate}


For the discussion to follow, we assume w.l.o.g. that $\delta_{\mathcal{B}}>\delta_{\mathcal{A}}$ (the other case is symmetric in most cases).

\subsubsection{Fixed point: $I_{\mathcal{A}}\rightarrow 0,I_{\mathcal{B}}\rightarrow 0 $}
We compute the partial derivatives when the rates of change in $I_{\mathcal{A}}$ and $I_{\mathcal{B}}$ are zero, and the corresponding Jacobian for this fixed point will be:
\[
\mathcal{J}(0,0) =
\left[ \begin{array}{cc}
\beta_{\mathcal{A}}N-\delta_{\mathcal{A}} & \beta_{\mathcal{A}}N  \\
\beta_{\mathcal{B}}N & \beta_{\mathcal{B}}N-\delta_{\mathcal{B}} 
\end{array} \right]
\]

In order to compute the eigenvalues of the Jacobian matrix we solve: 
$det(\mathcal{J}(0,0)-\lambda I)=0 \Rightarrow \lambda^2+\lambda(\delta_{\mathcal{A}}+\delta_{\mathcal{B}}-N(\beta_{\mathcal{A}}+\beta_{\mathcal{B}}))+\delta_{\mathcal{A}}\delta_{\mathcal{B}}-N(\delta_{\mathcal{B}}\beta_{\mathcal{A}}+\delta_{\mathcal{A}}\beta_{\mathcal{B}})=0$ 

The discriminant of this quadratic equation is always positive (see Appendix~\ref{app:pos_discr} for the proof). Assuming that $\delta_{\mathcal{B}}>\delta_{\mathcal{A}}$ we get the following real eigenvalues:

\noindent $\lambda_{1,2}=\frac{N(\beta_{\mathcal{A}}+\beta_{\mathcal{B}})-(\delta_{\mathcal{A}}+\delta_{\mathcal{B}})\pm \sqrt{((\delta_{\mathcal{B}}-\delta_{\mathcal{A}})-N(\beta_{\mathcal{A}}+\beta_{\mathcal{B}}))^2+4N\beta_{\mathcal{A}}(\delta_{\mathcal{B}}-\delta_{\mathcal{A}})}}{2}$

This fixed point will be hyperbolic only when none of the eigenvalues of the corresponding Jacobian has a zero real part. In addition, the system will be stable at a hyperbolic fixed point only if the real part of the eigenvalues of the Jacobian is negative. Since the discriminant is $\Delta > 0$ the resulting eigenvalues will be real. For stability we require that $\lambda_{1},\lambda_{2}<0$. We find conditions only for the case where $\lambda_1<0$, since $\lambda_2<\lambda_1$. 

As for $\lambda_1$ we require that:
\noindent $N(\beta_{\mathcal{A}}+\beta_{\mathcal{B}})-(\delta_{\mathcal{A}}+\delta_{\mathcal{B}})<-\sqrt{((\delta_{\mathcal{B}}-\delta_{\mathcal{A}})-N(\beta_{\mathcal{A}}+\beta_{\mathcal{B}}))^2+4N\beta_{\mathcal{A}}(\delta_{\mathcal{B}}-\delta_{\mathcal{A}})}$
which gives:

\begin{equation}\label{cond:00}
N\frac{\delta_{\mathcal{A}}\beta_{\mathcal{B}}+\delta_{\mathcal{B}}\beta_{\mathcal{A}}}{\delta_{\mathcal{A}}\delta_{\mathcal{B}}}<1
\end{equation}

Just as a sanity check, imagine that there is only one profile, that is $\delta_{\mathcal{A}}=\delta_{\mathcal{B}}=\delta$ and $\beta_{\mathcal{A}}=\beta_{\mathcal{B}}=\beta$. In this case, we get the condition $\frac{2N\beta}{\delta}<1$ which is stated in~\cite{Prakash:2011:TCA:2117684.2118312}, since the largest eigenvalue of the adjacency matrix of a clique with $2N$ nodes is $2N-1$.

\subsubsection{Fixed point: $I_{\mathcal{A}}\rightarrow N,I_{\mathcal{B}}\rightarrow N$} \label{ssec:NN}

In fact, we assume that $I_{\mathcal{A}}\rightarrow cN,I_{\mathcal{B}}\rightarrow cN$, $0<c<1$, and $c \rightarrow 1$. Thus, we are going to substitute these values and find necessary conditions so that $c$ tends to $1$. Substituting these values in Equations~(\ref{eq:A}) and (\ref{eq:B}) when $\frac{dI_{\mathcal{A}}}{dt}=\frac{dI_{\mathcal{B}}}{dt}=0$ we get:
\begin{equation}\label{eq:cna}
\beta_{\mathcal{A}}=\frac{\delta_{\mathcal{A}}}{2N(1-c)}
\end{equation}
\begin{equation}\label{eq:cnb}
\beta_{\mathcal{B}}=\frac{\delta_{\mathcal{B}}}{2N(1-c)}
\end{equation}

Increasing the values of $\beta_{A}$ or $\beta_{B}$, the value of $c$ will increase accordingly for the same healing rates and network size. Having proved that these are fixed points (by construction) we move to proving that they are hyperbolic and stable and get the necessary conditions. We get:
\[
\mathcal{J}(cN,cN) =
\left[ \begin{array}{cc}
(1-3c)\beta_{\mathcal{A}}N-\delta_{\mathcal{A}} 
& 
(1-c)\beta_{\mathcal{A}}N \\
(1-c)\beta_{\mathcal{B}}N
& 
(1-3c)\beta_{\mathcal{B}}N-\delta_{\mathcal{B}}
\end{array} \right]
\]

To find the eigenvalues of this jacobian we need to solve the following equation: $\lambda^2+\lambda(\delta_{\mathcal{A}}+\delta_{\mathcal{B}}+(3c-1)N(\beta_{\mathcal{A}}+\beta_{\mathcal{B}}))+\delta_{\mathcal{A}}\delta_{\mathcal{B}}+4c(2c-1)\beta_{\mathcal{A}}\beta_{\mathcal{B}}N^2+(3c-1)N(\beta_{\mathcal{A}}\delta_{\mathcal{B}}+\beta_{\mathcal{B}}\delta_{\mathcal{A}})=0$.

The discriminant is:
\noindent $\Delta=(\delta_{\mathcal{A}}+\delta_{\mathcal{B}}+(3c-1)N(\beta_{\mathcal{A}}+\beta_{\mathcal{B}}))^2-4\delta_{\mathcal{A}}\delta_{\mathcal{B}}-4(3c-1)N(\delta_{\mathcal{B}}\beta_{\mathcal{A}}+\delta_{\mathcal{A}}\beta_{\mathcal{B}})-16c(2c-1)\beta_{\mathcal{A}}\beta_{\mathcal{B}}N^2$

\noindent which results in:

\noindent 
$\Delta=(\delta_{\mathcal{A}}+\delta_{\mathcal{B}}+(3c-1)N(\beta_{\mathcal{A}}+\beta_{\mathcal{B}}))^2-4\delta_{\mathcal{A}}\delta_{\mathcal{B}}-12cN\beta_{\mathcal{A}}\delta_{\mathcal{B}}-12cN\beta_{\mathcal{B}}\delta_{\mathcal{A}}+4N\beta_{\mathcal{A}}\delta_{\mathcal{B}}+4N\beta_{\mathcal{B}}\delta_{\mathcal{A}}-32c^2\beta_{\mathcal{A}}\beta_{\mathcal{B}}N^2+16c\beta_{\mathcal{A}}\beta_{\mathcal{B}}N^2$

By using Equations~\ref{eq:cna} and \ref{eq:cnb} and replacing above we get:

\noindent $\Delta=\left(\delta_{\mathcal{A}}+\delta_{\mathcal{B}}+(3c-1)\left(\frac{\delta_{\mathcal{A}}+\delta_{\mathcal{B}}}{2(1-c)}\right)\right)^2-4\delta_{\mathcal{A}}\delta_{\mathcal{B}}-12c\frac{\delta_{\mathcal{A}}\delta_{\mathcal{B}}}{(1-c)}+4\frac{\delta_{\mathcal{A}}\delta_{\mathcal{B}}}{(1-c)}-8c^2\frac{\delta_{\mathcal{A}}\delta_{\mathcal{B}}}{(1-c)^2}+4c\frac{\delta_{\mathcal{A}}\delta_{\mathcal{B}}}{(1-c)^2}\Rightarrow$
\[
\Delta=\frac{1}{4(1-c)^2}\left[(\delta_{\mathcal{A}}+\delta_{\mathcal{B}})^2(1+c)^2-4c\delta_{\mathcal{A}}\delta_{\mathcal{B}}\right]
\]
This means that the eigenvalues are the following:

\[
\lambda_{1,2}=\frac{-(c+1)(\delta_{\mathcal{A}}+\delta_{\mathcal{B}})\pm\sqrt{(c+1)^2(\delta_{\mathcal{A}}+\delta_{\mathcal{B}})^2-4c\delta_{\mathcal{A}}\delta_{\mathcal{B}}}}{4(1-c)}
\]

Let $x=(c+1)(\delta_{\mathcal{A}}+\delta_{\mathcal{B}})$ and $y=4c\delta_{\mathcal{A}}\delta_{\mathcal{B}}$. Then, by imposing $\frac{1}{2}<c<1$, it holds that $x,y>0$ and it follows that the real part of $\lambda_1=\frac{-x-\sqrt{x^2-y}}{4(1-c)}$ is always negative since $x^2-y<x^2$ and if $x^2-y$ is negative then its square root will be of the form $zi$, where $z\in \mathbb{R}$ and $i=\sqrt{-1}$. For the same reasons, the real part of $\lambda_2=\frac{-x+\sqrt{x^2-y}}{4(1-c)}$ is always negative. In fact, one can prove that $\Delta=x^2-y>0$ but it requires a lot of algebraic handling.

The values for $\beta_{\mathcal{A}}$ and $\beta_{\mathcal{B}}$ as determined by Equations~\ref{eq:cna} and \ref{eq:cnb}, for $c\in \left(\frac{1}{2},1\right)$ ensure that the fixed point is at $cN$ for both profiles. Note that in contrast to the fixed point where the virus dies out here we do not get a condition in the form of an inequality since this is much harder to get. However, we get a relationship between all parameters so that the virus infects all nodes when $c \rightarrow 1$. In addition, we get necessary conditions so that a part of the network is infected (e.g., imagine choosing $c=2/3$). Finally, $\beta_{\mathcal{A}}$ (the same holds for $\beta_{\mathcal{B}}$) should be less than $1$ which means that $c<\frac{2N-\delta_{\mathcal{A}}}{2N}$ and $c<\frac{2N-\delta_{\mathcal{B}}}{2N}$.

\subsubsection{Fixed point $I_{\mathcal{A}}\rightarrow 0,I_{\mathcal{B}}\rightarrow N $}

In fact we are going to prove necessary conditions for the case where $I_{\mathcal{A}}\rightarrow (1-c)N,I_{\mathcal{B}}\rightarrow cN$, where $0<c<1$. Note, that having one profile completely infected with the virus allows for some infections in the other profile, which may be resilient, since there will be some nodes infected with positive probability.

We substitute these values in Equations~\ref{eq:A} and \ref{eq:B} and we get a relation between the parameters:
\begin{equation}\label{cond:1cN}
\beta_{\mathcal{A}}=\frac{\delta_{\mathcal{A}}}{N}\frac{1-c}{c} \mbox{, }
\beta_{\mathcal{B}}=\frac{\delta_{\mathcal{B}}}{N}\frac{c}{1-c}
\end{equation}

Thus, when these two relations hold then we guarantee that $(1-c)N$ and $cN$ are fixed points. In the following, we assume w.l.o.g. that $c\rightarrow 1$ and thus profile $\mathcal{A}$ is the resilient one while profile $\mathcal{B}$ is the more susceptible one.  

Our main work is to examine the necessary stability conditions. The Jacobian matrix in this case is the following:

\[
\mathcal{J}((1-c)N,cN) =
\left[ \begin{array}{cc}
(c-1)\beta_{\mathcal{A}}N-\delta_{\mathcal{A}} 
& 
c\beta_{\mathcal{A}}N \\
(1-c)\beta_{\mathcal{B}}N 
& 
-c\beta_{\mathcal{B}}N-\delta_{\mathcal{B}}
\end{array} \right]
\]

In order to compute the eigenvalues of the Jacobian matrix we solve the following quadratic equation: $\lambda^2+\lambda(\delta_{\mathcal{A}}+\delta_{\mathcal{B}}+c\beta_{\mathcal{B}}N+(1-c)\beta_{\mathcal{A}}N)+\delta_{\mathcal{A}}\delta_{\mathcal{B}}+(1-c)\beta_{\mathcal{A}}\delta_{\mathcal{B}}N+c\delta_{\mathcal{A}}\beta_{\mathcal{B}}N=0$.

Similarly to~\ref{ssec:NN}, let $x=\frac{c^2-c+1}{c(1-c)}(c\delta_{\mathcal{B}}+(1-c)\delta_{\mathcal{A}})$ and $y=4\delta_{\mathcal{A}}\delta_{\mathcal{B}}\frac{2c^2-2c+1}{c(1-c)}$. Note that $x,y>0$ for $c\in (0,1)$ and the characteristic polynomial can be written as $\lambda^2+x\lambda +\frac{1}{4}y=0$. Then, the eigenvalues are $\lambda_{1,2}=\frac{-x\pm\sqrt{x^2-y}}{2}$.

The real part of $\lambda_1=\frac{-x-\sqrt{x^2-y}}{2}$ is always negative since $x^2-y<x^2$ and if $x^2-y$ is negative then its square root will be of the form $zi$, where $z\in \mathbb{R}$ and $i=\sqrt{-1}$. For the same reasons, the real part of $\lambda_2=\frac{-x+\sqrt{x^2-y}}{2}$ is always negative.

Similarly to~\ref{ssec:NN}, we provide two equalities that define the relationship between the parameters of the problem. We can choose $c$, $\delta_{\mathcal{A}}$ and $\delta_{\mathcal{B}}$, compute $\beta_{\mathcal{A}}$ and $\beta_{\mathcal{B}}$ and then the resulting dynamical system tends to have a footprint of $(1-c)N$ for profile $\mathcal{A}$ and $cN$ for profile $\mathcal{B}$.

\subsection{Arbitrary Simple Relation Graphs}\label{arbitrary2}

Let $A$ be the adjacency matrix of the arbitrary connected graph $G$ with $N$ nodes. Let $p_{i,\mathcal{A}_k}$ be the probability of node $i$ in profile $\mathcal{A}_k$ to be infected. Finally, let $p_i$ be the probability that node $i$ is infected when the profile of $i$ is irrelevant.

In this case the dynamical system that describes the evolution for all profiles $\mathcal{A}_k$ is the following: 

\begin{equation}
\frac{dp_{i,\mathcal{A}_k}}{dt}=-\delta_{\mathcal{A}_k}p_{i,\mathcal{A}_k}+\beta_{\mathcal{A}_k}(1-p_{i,\mathcal{A}_k})\sum\limits_{j} (1_{j} A_{ji})
\end{equation}
where $1_{j}$ is the indicator random variable denoting whether node $j$ is infected from the virus. Due to the presence of random variables $1_{j}$, our system is not a Markov chain. By applying a first order mean-field approximation (see~\cite{gould2010statistical} for a nice presentation) we deliberately assume that these indicator variables are equal to their expected value. As such, $1_{j}\approx E[1_{j}]=p_{j}$, and thus we get the following equations for each node $i$:

\begin{equation}
\frac{dp_{i,\mathcal{A}_k}}{dt}=-\delta_{\mathcal{A}_k}p_{i,\mathcal{A}_k}+\beta_{\mathcal{A}_k}(1-p_{i,\mathcal{A}_k})\sum\limits_{j} (p_{j} A_{ji})
\end{equation}

We are interested in computing necessary conditions for fixed points. At a fixed point it holds that $\frac{dp_{i,\mathcal{A}_k}}{dt}=0$ and thus we get:

\begin{equation} \label{eq:fixeda}
\delta_{\mathcal{A}_k}p_{i,\mathcal{A}_k}=\beta_{\mathcal{A}_k}(1-p_{i,\mathcal{A}_k})\sum\limits_{j} (p_{j} A_{ji})
\end{equation}

\noindent Writing this in a vector form we get:

\begin{equation}\label{eq:arb_fixed}
p = \Delta^{-1} B Q A p
\end{equation}
where $p$ is the state column vector $[p_1,p_2,\ldots,p_N]^T$. In addition, $\Delta$ is a diagonal $N\times N$ matrix, where element $(i,i)$ is equal to $\delta_{\mathcal{A}}$ if node $i$ belongs to profile $\mathcal{A}$ and $\delta_{\mathcal{B}}$ if it belongs to profile $\mathcal{B}$. Similarly, $B$ is a $N\times N$ diagonal matrix containing values $\beta_{\mathcal{A}}$ or $\beta_{\mathcal{B}}$ depending on the profile of the corresponding node. Finally, $Q$ is also a diagonal $N\times N$ matrix where its element $(i,i)$ contains the probability $1-p_i$, that is $Q=I-diag(p)$ and represents the probability that a node is not infected. Note also that the adjacency matrix $A$ is symmetric and as a result $A=A^T$. 

This is a homogeneous non-linear system of equations, which makes finding the solutions a rather herculean task. This is why, apart from the obvious zero solution, all other interesting fixed points and stability conditions will be only qualitatively characterized.

\subsubsection{The Zero Fixed Point and Stability Condition} \label{sssec:zero}

The jacobian of Equation~\ref{eq:arb_fixed} can be written as:
\begin{equation}\label{eq:jacobian}
\mathcal{J}(p)=-I+\Delta^{-1}BQA-\Delta^{-1}B diag(Ap)
\end{equation}
where $diag(Ap)$ is the diagonal $N\times N$ matrix, whose diagonal contains the elements of the column vector $Ap$. 

Apparently, $p=\hat{0}$, where $\hat{0}$ is the column vector full of zeros, is a fixed point since it is a solution of Equation~\ref{eq:arb_fixed}. The fixed point $p=\hat{0}$ is stable if it holds that the real part of all eigenvalues of $\mathcal{J}(\hat{0})=\Delta^{-1}BA-I$ ($Q=I$ in this case) are negative. As a sanity check, if $\beta=\beta_{\mathcal{A}}=\beta_{\mathcal{B}}$ and $\delta=\delta_{\mathcal{A}}=\delta_{\mathcal{B}}$, which means that there is only one profile, then if $\lambda$ is the largest eigenvalue of $A$ the largest eigenvalue of $\mathcal{J}(\hat{0})$ will be $\frac{\beta\lambda}{\delta}-1$. Thus, the fixed point will be stable if $\frac{\beta\lambda}{\delta}-1 <0 \Rightarrow \frac{\beta\lambda}{\delta}<1$, which in fact is the result provided in \cite{Prakash:2011:TCA:2117684.2118312} for one profile and one virus for the SIS model.

\subsubsection{Other Fixed Points}

What happens when the healing and infection rates are such so that $0$ is not a stable fixed point? This happens when $\rho(B\cdot A-\Delta)>0$. We start by looking at point $p=\hat{1}$, where $\hat{1}$ is the column vector full of ones. Then, the right hand-side of Equation~\ref{eq:arb_fixed} becomes zero (since $Q=0$) and the only case for the equality to hold is if $\Delta=0$ (of course all infection rates should be non-zero). This means that all nodes will definitely be in an infected state if the probability to heal is zero, which is a sound conclusion.

We start with some easy facts about the matrix $\Delta^{-1} BQA$.

\begin{lemma} \label{lem:irreducible}
Matrix $\Delta^{-1} BQA$ is non-negative and irreducible. 
\end{lemma}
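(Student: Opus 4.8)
The plan is to split the statement into its two independent parts, \textbf{non-negativity} and \textbf{irreducibility}, and dispatch the first almost immediately. By construction $\Delta^{-1}$ and $B$ are diagonal with strictly positive entries ($1/\delta_{\mathcal{A}},1/\delta_{\mathcal{B}}$ and $\beta_{\mathcal{A}},\beta_{\mathcal{B}}$), while $Q=I-diag(p)$ is diagonal with entries $1-p_i\ge 0$ because each $p_i$ is a probability in $[0,1]$. Hence the product $D:=\Delta^{-1}BQ$ is a diagonal matrix with non-negative diagonal entries $D_{ii}=\beta_i(1-p_i)/\delta_i$, and since $A$ has $0/1$ entries, every entry $(\Delta^{-1}BQA)_{ij}=D_{ii}A_{ij}$ is a product of non-negative numbers. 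This settles non-negativity.

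For irreducibility I would work with the directed graph $\mathcal{G}(M)$ associated to a non-negative matrix $M$, which places an arc $i\to j$ whenever $M_{ij}\neq 0$; recall that $M$ is irreducible precisely when $\mathcal{G}(M)$ is strongly connected. The key observation is that because $D$ is diagonal, the scaled matrix inherits the sparsity pattern of $A$ row by row: $(\Delta^{-1}BQA)_{ij}=D_{ii}A_{ij}$ is non-zero if and only if $A_{ij}\neq 0$ \emph{and} $D_{ii}\neq 0$. Thus, provided every diagonal scaling factor $D_{ii}$ is strictly positive, the graphs $\mathcal{G}(\Delta^{-1}BQA)$ and $\mathcal{G}(A)$ coincide. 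Since $G$ is an undirected connected graph, its adjacency matrix $A$ is irreducible (equivalently $\mathcal{G}(A)$ is strongly connected), and the conclusion would follow at once.

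The step that genuinely needs care—and the main obstacle—is verifying $D_{ii}>0$ for all $i$, i.e. that $1-p_i>0$, since a single node with $p_i=1$ would zero out row $i$ of $\Delta^{-1}BQA$ and could destroy strong connectivity. I would rule this out using the fixed-point relation in Equation~\ref{eq:fixeda}: if $p_{i,\mathcal{A}_k}=1$ for some node, the right-hand side $\beta_{\mathcal{A}_k}(1-p_{i,\mathcal{A}_k})\sum_j p_jA_{ji}$ vanishes while the left-hand side equals $\delta_{\mathcal{A}_k}>0$, a contradiction (this is exactly the mechanism already noted for $p=\hat{1}$, where equality forced $\Delta=0$). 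Hence at any relevant fixed point every $p_i<1$, so $D_{ii}=\beta_i(1-p_i)/\delta_i>0$, the sparsity patterns match, and irreducibility of $\Delta^{-1}BQA$ transfers from that of $A$. I would state this positivity hypothesis explicitly so the lemma is applied only where it holds.
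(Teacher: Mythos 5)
Your proof takes essentially the same route as the paper: non-negativity from the non-negative diagonal factor $\Delta^{-1}BQ$ times the non-negative $A$, and irreducibility because positive diagonal scaling preserves the zero pattern of $A$, which is irreducible since $G$ is connected. You are in fact more careful than the paper, whose proof simply asserts that $\Delta^{-1}BQ$ is a \emph{positive} diagonal matrix ``assuming that all infection and healing rates are non-zero'' without addressing the possibility $p_i=1$ (which would zero out a row); your use of the fixed-point relation to rule out $p_i=1$ closes that small gap.
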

\begin{proof}
Matrix $\Delta^{-1} BQ$ is a diagonal positive matrix, assuming that all infection and healing rates are non-zero. As such, matrix $\Delta^{-1} BQA$ is non-negative since $\Delta^{-1} BQ$ is non-negative and $A$ is also non-negative. Finally, $A$ is irreducible since it is the adjacency matrix of a connected simple graph whose entries are simply multiplied by the diagonal entries of $\Delta^{-1} BQ$.
\end{proof}

\begin{lemma}
$\Delta^{-1} BQA$ has a positive real eigenvalue $\lambda$ as its largest in absolute values. The multiplicity of $\lambda$ is $1$ and it has the only positive eigenvector. 
\end{lemma}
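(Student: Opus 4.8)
The plan is to invoke the Perron--Frobenius theorem, whose hypotheses are exactly what the preceding lemma supplies. Recall that for a non-negative irreducible square matrix $M$, the Perron--Frobenius theorem guarantees that the spectral radius $\rho(M)$ is itself an eigenvalue of $M$, that this eigenvalue is real and strictly positive, that it is algebraically (and hence geometrically) simple, and that it admits an eigenvector all of whose entries are strictly positive; moreover, up to positive scaling this is the only eigenvector that can be chosen with entries of a single sign. Since Lemma~\ref{lem:irreducible} already established that $\Delta^{-1} BQA$ is non-negative and irreducible, every conclusion of the present statement follows by instantiating $M = \Delta^{-1} BQA$.

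Concretely, the steps I would carry out are the following. First, set $\lambda = \rho(\Delta^{-1} BQA)$ and apply the theorem to conclude that $\lambda$ is a real positive eigenvalue. Second, because $\lambda$ equals the spectral radius, by definition $|\mu| \le \lambda$ for every other eigenvalue $\mu$, which is precisely the claim that $\lambda$ is largest in absolute value. Third, the simplicity part of Perron--Frobenius gives that the algebraic multiplicity of $\lambda$ equals $1$. Fourth, the same theorem produces a strictly positive eigenvector associated with $\lambda$, unique up to scaling, while no eigenvector belonging to any other eigenvalue can be taken non-negative; this yields the ``only positive eigenvector'' assertion.

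The one point that deserves care, and which I regard as the only genuine subtlety rather than a real obstacle, is the phrase ``largest in absolute values.'' Irreducibility alone does not force $\lambda$ to \emph{strictly} dominate the remaining eigenvalues in modulus: a periodic (imprimitive) non-negative matrix can have several eigenvalues of modulus $\rho(M)$ spread over the circle of that radius. Hence I would read the statement in its weak form, namely $\lambda = \rho(\Delta^{-1} BQA) \ge |\mu|$ for all eigenvalues $\mu$, which is exactly what irreducibility guarantees and which suffices for the later argument, where $\lambda$ is normalized to $1$. Strict domination would additionally require primitivity (for instance a self-loop or a non-bipartite underlying graph), but since the result is used only through the existence and simplicity of the positive Perron eigenvector, this stronger property is not needed here.
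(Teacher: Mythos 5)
Your proof takes exactly the same route as the paper's: the paper's entire proof is a one-line invocation of the Perron--Frobenius theorem applied to the matrix $\Delta^{-1}BQA$, which the preceding lemma established to be non-negative and irreducible. Your additional caveat---that irreducibility alone yields only $|\mu|\le\lambda$ for the other eigenvalues rather than strict dominance in modulus (which would require primitivity)---is a correct refinement of a point the paper's terse proof glosses over, and your weak reading is the appropriate one for how the lemma is used later.
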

\begin{proof}
By Perron-Frobenius theorem $\lambda$ is called Perron-Frobenius (PF) eigenvalue with its corresponding PF eigenvector.
\end{proof}

\begin{lemma} \label{lem:fixed}
The PF eigenvector $f$ is a fixed point.
\end{lemma}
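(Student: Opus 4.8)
The plan is to read the fixed-point equation \ref{eq:arb_fixed}, namely $f = \Delta^{-1}BQA f$, literally as the statement that $f$ is an eigenvector of $\Delta^{-1}BQA$ with eigenvalue $1$, where crucially $Q = I - diag(f)$ is itself built from $f$. So the task is not to compute an eigenvector of a fixed matrix, but to exhibit a positive vector that is simultaneously the Perron-Frobenius eigenvector of the matrix it induces \emph{and} has PF eigenvalue exactly $1$. Because of this self-reference I would avoid any attempt to simply rescale the eigenvector (rescaling $f$ changes $Q$ and destroys the eigenvector relation); instead I would recast the condition as a genuine self-map fixed-point problem. Concretely, I would rewrite \ref{eq:fixeda} by solving for each coordinate: with $\gamma_i=\beta_{\mathcal{A}_k}/\delta_{\mathcal{A}_k}$ for the profile of node $i$, the relation $p_i=\gamma_i(1-p_i)(Ap)_i$ is equivalent to $p_i=\gamma_i(Ap)_i/(1+\gamma_i(Ap)_i)$. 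This defines a continuous self-map $T$ of the compact convex cube $[0,1]^N$ into itself whose fixed points are exactly the equilibria of the system, so Brouwer's theorem yields a fixed point.

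Second, to discard the trivial solution and land on the nontrivial one, I would invoke the supercritical regime of this subsection, $\rho(B\cdot A-\Delta)>0$, in which $\hat{0}$ is unstable. The linearization of $T$ at $\hat{0}$ is the matrix with $(i,j)$ entry $\gamma_i A_{ij}$, i.e. $\Delta^{-1}BA$, which is non-negative and irreducible by Lemma~\ref{lem:irreducible}; its PF eigenvalue exceeds $1$ precisely under the condition $\rho(B\cdot A-\Delta)>0$. Since $T$ is in addition monotone and strictly sublinear (the map $t\mapsto \gamma t/(1+\gamma t)$ is increasing and strictly concave), the standard theory of such maps guarantees that the only fixed point is repelled away from $\hat{0}$, so the Brouwer fixed point may be taken nonzero.

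Third, positivity and identification. A nonzero fixed point must be strictly positive, by a zero-pattern argument: if $f_i=0$ then \ref{eq:fixeda} forces $(Af)_i=0$, hence $f_j=0$ for every neighbor $j$ of $i$, and the irreducibility of $A$ (connectivity, Lemma~\ref{lem:irreducible}) propagates this to all nodes, contradicting $f\neq\hat{0}$. With $f$ strictly positive and $\Delta^{-1}BQAf=f$, Perron-Frobenius forces $f$ to be the unique positive eigenvector of $\Delta^{-1}BQA$, i.e. its PF eigenvector, and its eigenvalue, being that attached to a positive eigenvector, is the PF eigenvalue, here equal to $1$. This is exactly the claim.

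The main obstacle is the self-referential dependence of $Q$ on $f$: it blocks any purely linear-algebraic ``just pick the eigenvector'' argument and is what forces passage to a nonlinear fixed-point theorem via the reformulated map $T$. The secondary delicate point is excluding the trivial equilibrium and certifying that the surviving fixed point is both nonzero and strictly positive; the monotone, strictly sublinear structure of $T$ together with the supercriticality $\rho(B\cdot A-\Delta)>0$ is precisely what supplies this.
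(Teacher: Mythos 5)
Your proposal is correct in substance, but it proves something genuinely different from what the paper's own proof does, and the two are complementary. The paper treats the fixed-point property as essentially definitional: $f$ is introduced as the self-consistent Perron eigenvector of $\Delta^{-1}BQA$ with eigenvalue $1$, so $f=\Delta^{-1}BQAf$ is Equation~\ref{eq:arb_fixed} verbatim, and the entire body of the paper's proof is instead a \emph{stability} argument --- it writes down $\mathcal{J}(f)=-I+\Delta^{-1}B(I-diag(f))A-\Delta^{-1}B\,diag(Af)$, splits off the largest eigenvalues of the summands, uses $\lambda\left(\Delta^{-1}B(I-diag(f))A\right)\le 1$, and concludes $\lambda(\mathcal{J}(f))\le 0$. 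What the paper never establishes is precisely the point you identified as the crux: that a vector satisfying the self-referential condition (eigenvector, with eigenvalue exactly $1$, of the matrix built from itself) exists at all. Your route --- recasting the equilibrium condition as the self-map $T(p)_i=\gamma_i(Ap)_i/\left(1+\gamma_i(Ap)_i\right)$ on $[0,1]^N$, producing a nontrivial fixed point in the supercritical regime $\rho(B\cdot A-\Delta)>0$, proving strict positivity by the zero-propagation/connectivity argument, and then invoking the fact that a positive eigenvector of a non-negative irreducible matrix must carry the PF eigenvalue --- supplies exactly this missing existence statement; it is the standard Lajmanovich--Yorke treatment of heterogeneous SIS equilibria. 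Two caveats. First, Brouwer alone may return $\hat{0}$, so your second step needs one concrete line: take $v>0$ the PF eigenvector of the linearization $\Delta^{-1}BA$ with eigenvalue $\mu>1$, check that $T(\epsilon v)\ge \epsilon v$ coordinate-wise for small $\epsilon>0$, and apply Brouwer (or monotone iteration) on the order interval $\{p : \epsilon v\le p\le \hat{1}\}$, which $T$ maps into itself by monotonicity; every fixed point there is nonzero. Second, your argument says nothing about stability, which is what the paper's proof is attempting to deliver (loosely --- the largest eigenvalue of a sum of non-symmetric matrices is not subadditive in general, though the paper's conclusion can be repaired using monotonicity of the spectral abscissa for Metzler matrices, since $\mathcal{J}(f)\le -I+\Delta^{-1}B(I-diag(f))A$ entrywise). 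So your proof fills a real gap in the paper, while the paper's proof addresses a property (attractiveness of $f$) that yours does not touch.
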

\begin{proof}
\[
\mathcal{J}(f)=-I+\Delta^{-1}B(I-diag(f))A-\Delta^{-1}B diag(Af) 
\]
All eigenvalues of $\mathcal{J}(f)$ are negative. 
\[
\lambda(\mathcal{J}(f))\leq -1 + \lambda\left(\Delta^{-1}B(I-diag(f))A\right) +\lambda\left(-\Delta^{-1}B diag(Af)\right)
\] 
However, we know that $\lambda\left(\Delta^{-1}B(I-diag(f))A\right) \leq 1$ and thus we get:
\[
\lambda(\mathcal{J}(f))\leq \lambda\left(-\Delta^{-1}B diag(Af)\right)
\]
Since $\Delta^{-1}B diag(Af)$ is a positive diagonal matrix we get that $\lambda\left(-\Delta^{-1}B diag(Af)\right) \leq 0$ which proves the fact that $\lambda(\mathcal{J}(f))\leq 0$.
\end{proof}

This PF eigenvector does not really provide us with enough information. In fact, it seems pretty hard and rather overly optimistic to get a clean result. This is why focusing back in the case of two profiles, we are interested in fixed points that correspond to the following two cases: a) almost all nodes are infected in both profiles and b) almost all nodes of one profile are infected and the nodes in the other profile are healthy. We will focus on specific cases of graphs and provide a rather qualitative explanation of results.

\subsubsection{The Case of Two Profiles}

Assume two profiles $\mathcal{A}$ and $\mathcal{B}$. We also assume that the edges of the network have no weights and as a result the adjacency matrix is a 0/1 matrix. Then the PF eigenvector $f$ could be written as 
$f=
\left[ \begin{array}{c}
f_{\mathcal{A}} \\
f_{\mathcal{B}}
\end{array} \right]$
corresponding to the two profiles. We write differently Equation~\ref{eq:fixeda} for the two profiles as follows:

\begin{equation} \label{eq:fixeda-new}
\sum\limits_{j\in \mathcal{B}} (p_{j} A_{ji}) + \sum\limits_{j\in \mathcal{A}} (p_{j} A_{ji})= \frac{\delta_{\mathcal{A}}}{\beta_{\mathcal{A}}} \frac{p_{i,\mathcal{A}}}{1-p_{i,\mathcal{A}}} 
\end{equation}

\begin{equation}\label{eq:fixedb-new}
\sum\limits_{j\in \mathcal{A}} (p_{j} A_{ji}) + \sum\limits_{j\in \mathcal{B}} (p_{j} A_{ji})= \frac{\delta_{\mathcal{B}}}{\beta_{\mathcal{B}}} \frac{p_{i,\mathcal{B}}}{1-p_{i,\mathcal{B}}} 
\end{equation}

We first look at the case where both profiles are susceptible to the virus and the network is flooded. As previously mentioned, we only provide a qualitative analysis by going backwards in some sense. In particular, we assume that $\forall i$, $p_i\in [a,b]$, for $0 <a < b <1$. Then, we get the following inequalities by Equations~\ref{eq:fixeda-new} and \ref{eq:fixedb-new}.
\[\forall i: \frac{\beta_{\mathcal{A}}}{\delta_{\mathcal{A}}} d(i) < \frac{b}{a(1-b)}
 \mbox{, }
\forall i: \frac{\beta_{\mathcal{A}}}{\delta_{\mathcal{A}}} d(i) > \frac{a}{b(1-a)}\]
\[\forall i: \frac{\beta_{\mathcal{B}}}{\delta_{\mathcal{B}}} d(i) < \frac{b}{a(1-b)}
 \mbox{, }
\forall i: \frac{\beta_{\mathcal{B}}}{\delta_{\mathcal{B}}} d(i) > \frac{a}{b(1-a)}\]
where $d(i)$ is the degree of node $i$. Apparently this describes a rather limited (albeit infinite) family of graphs. 

For example, assume that $\frac{\beta_{\mathcal{A}}}{\delta_{\mathcal{A}}}=2$ and $\frac{\beta_{\mathcal{B}}}{\delta_{\mathcal{B}}}=4$ and let $a=0.6$ and $b=0.9$. Then, we get that the following restrictions should hold on this graph to show indeed such behavior with respect to the virus. 
\[d(i)< 7.5, d(i) > 0.8, d(i) < 3.75, d(i) > 0.4\]
This means that all graphs with minimum degree $1$ and maximum degree $3$ will have a PF eigenvector whose entries will be in the range $[0.6, 0.9]$ for the particular choice of parameters. 

Now we move to the case where $f_{\mathcal{A}}$ has small entries and $f_{\mathcal{B}}$ has large ones. This means that $\mathcal{B}$ is susceptible to the virus while $\mathcal{A}$ is not. Thus, we assume that all entries of $f_{\mathcal{B}}$ are in the range $[a,b]$ and all entries in $f_{\mathcal{A}}$ in the range $\left[\frac{a}{x},\frac{b}{x}\right]$. Parameter $x>1$ allows for simplifications and expresses how smaller the probability for nodes in $\mathcal{B}$ is w.r.t. the probability of nodes in $\mathcal{A}$.

Similarly, we get the following inequalities by Equations~\ref{eq:fixeda-new} and \ref{eq:fixedb-new}.
\[\forall i: \frac{\delta_{\mathcal{A}}}{\beta_{\mathcal{A}}}  \frac{a}{b(x-a)} < \frac{1}{x}d_{\mathcal{A}}(i)+d_{\mathcal{B}}(i) < \frac{\delta_{\mathcal{A}}}{\beta_{\mathcal{A}}}  \frac{b}{a(x-b)}
\]
\[\forall i: \frac{\delta_{\mathcal{B}}}{\beta_{\mathcal{B}}}  \frac{a}{b(1-a)} < \frac{1}{x}d_{\mathcal{A}}(i)+d_{\mathcal{B}}(i) < \frac{\delta_{\mathcal{B}}}{\beta_{\mathcal{B}}}  \frac{b}{a(1-b)}
\]
where $d_{\mathcal{A}}(i)$ is the degree of node $i$ w.r.t. profile $\mathcal{A}$ and similarly is defined $d_{\mathcal{B}}(i)$. 

For example, assume that $\frac{\delta_{\mathcal{A}}}{\beta_{\mathcal{A}}}=10^3$ and $\frac{\delta_{\mathcal{B}}}{\beta_{\mathcal{B}}}=0.01$ and let $a=0.8$, $b=0.99$ and $x=10^2$. Then, we get that the following restrictions should hold on this graph to show indeed such behavior with respect to the virus. 
\[\frac{1}{x}d_{\mathcal{A}}(i)+d_{\mathcal{B}}(i) < 12.4, \frac{1}{x}d_{\mathcal{A}}(i)+d_{\mathcal{B}}(i) > 8.1 \]
\[\frac{1}{x}d_{\mathcal{A}}(i)+d_{\mathcal{B}}(i) < 61.9 , \frac{1}{x}d_{\mathcal{A}}(i)+d_{\mathcal{B}}(i) > 2.02\]
This means that all graphs with minimum degree $9$ and maximum degree $12$ will have a PF eigenvector whose entries will be in the range $[0.8, 0.99]$ for $f_{\mathcal{B}}$ and in the range $[0.008,0.0099]$ for $f_{\mathcal{A}}$ for the particular choice of all the other parameters.

\section{Experimental Results} \label{experiments}

We evaluate our results using simulation experiments on various synthetic datasets including publicly available datasets. 
In particular, the datasets used are:
\begin{itemize}
\item Clique: A clique graph of $2000$ nodes.
\item Arbitrary topologies: Arbitrary graphs with varying average node degree, including powerlaw graphs.
\item Enron email network: The Enron email communication network \cite{emailenron} covers all the email communication within a dataset of email addresses. Nodes of the network are email addresses and if an address $i$ sent at least one email to address $j$, the graph contains an undirected edge from $i$ to $j$. The graph consists of $36692$ nodes and $183831$ edges. For all non-Enron email addresses we can only observe their communication with Enron email addresses.
\item Montgomery: A physical contact graph, representing the synthetic population of Montgomery County \cite{montgomery}, which contains $77,820$ people interacting with each other during their daily activities. The total number of activities for the population is $429,590$, which are conducted in $26,941$ distinct locations (besides home locations). The resulting social contact network has $77,820$ nodes (one per person) and $2,019,220$ edges.
\end{itemize}

In our experiments we implemented a discrete-time simulation in Java of the SIS model with two profiles $\mathcal{A}$ and $\mathcal{B}$, unless otherwise stated. In the experimental evaluation the profiles are set by dividing randomly the respective dataset in two equal parts based on the size of the graph. In one case the profiles are specified based on age groups and we examine the results of such profiling. For the experiments regarding the clique graph, the initial infection consists of $10$ nodes in each profile and every simulation is run for $2000$ time steps. In the contact networks (Enron email and Montgomery), the initial infection consists of the 5$\%$ of nodes with maximum degree (in range) and every simulation lasts for $5000$ rounds. A different approach is followed in powerlaw graphs. The graphs are generated using Boost library (C++), they have $5000$ nodes and every simulation lasts for $5000$ rounds. However, here we infect an amount of 5\% of nodes which have low degree. This is due to the fact that these graphs that follow a powerlaw distribution, they have many nodes with low degree while a few nodes with high degree. Consequently, we choose to infect an amount of low-degree nodes in order to observe how the virus propagates in the network since this is not possible with high-degree nodes.

We run experiments that verify our results for each fixed point for the following cases: a) Both profiles in the network have great endurance against the virus and tend to stay unaffected by the virus, b) both profiles have equally low endurance against the virus and all agents get infected and c) one profile has low endurance against the virus and the corresponding proportion of the  network tends to get infected while the other profile has high endurance and the corresponding agents tend to stay susceptible. 

\subsection{Simulation Results}
\subsubsection{Simulations for clique}
Figure~\ref{clique} demonstrates our results for the clique for various cases. In Figure \ref{clique}(a), we observe the case where the nodes have very high endurance against the virus and as a result the infected nodes in both profiles heal and the virus dies out. The figure depicts the number of infected nodes of both profiles and the total amount of infected nodes in the clique versus time. The used parameters that satisfy condition \ref{cond:00} are $(\beta_{\mathcal{A}},\delta_{\mathcal{A}} )=(0.0000005,0.01)$ and $(\beta_{\mathcal{B}},\delta_{\mathcal{B}} ) =(0.0000009,0.01)$. While the simulation was run for $2000$ rounds the fixed point $(0,0)$ is stable and as a result the system converges to this point very quickly, in $200$ rounds. As a result we omitted to depict the amount of infected nodes for rounds ($201-5000$) since it is still down to zero.

In Figure \ref{clique}(b) we assume a network where the nodes from both profiles $\mathcal{A}$ and $\mathcal{B}$ have the following healing and  infection rates $(\beta_{\mathcal{A}},\delta_{\mathcal{A}} )=(0.01,0.0005)$ and $(\beta_{\mathcal{B}},\delta_{\mathcal{B}} ) =(0.03,0.0006)$ which satisfy conditions \ref{eq:cna} and \ref{eq:cnb}. Here, the stable point is $(cN, cN)$ where $c=0.99$. It is expected that the majority of nodes from both profiles, will get infected and the equilibrium point will be reached when almost $cN$ nodes get infected. We observe that the system converges very quickly due to high infection rate and consequently this is a stable fixed point since for all rounds of execution, the amount of infected nodes from both profiles is steadily up to 2000. The amount of infected nodes in profile $\mathcal{A}$ in Figure \ref{clique}(b), is covered from the amount of infected nodes in profile $\mathcal{B}$ since they are equal. We forgot to mention that experimentally, we consider a fixed point as stable when the amount of infected nodes for the majority of rounds, is almost the same.
The value of variable $c$ affects the amount of nodes that get infected as well as the time required for the system convergence. Instinctively, we expect that for lower values, the amount of infected nodes will be smaller and the time required for the system to reach an equilibrium state will be increased. 

Finally, in the case where profile $\mathcal{A}$ has low sensitivity against the virus while profile $\mathcal{B}$ has high sensitivity, the stable fixed point according to our results, should be $((1-c)N,cN)$. As it can be seen in Figure \ref{clique}(c), the amount of infected nodes in profile $\mathcal{A}$ is very low in comparison with the amount of infected nodes in profile $\mathcal{B}$. In this case, we have assumed that $c=0.99$ where infection rates and healing rates, satisfying the conditions in Equation~\ref{cond:1cN}, are $(\beta_{\mathcal{A}},\delta_{\mathcal{A}} )=(0.0000055,0.1)$ and $(\beta_{\mathcal{B}},\delta_{\mathcal{B}} )=(0.1,0.01)$. As it can be seen, this is also a stable fixed point since for all rounds of execution, the amount of infected nodes from profile $\mathcal{B}$ is steadily up to 1000 while for profile $\mathcal{A}$ is down to zero.The amount of infected nodes in profile $\mathcal{B}$ in Figure \ref{clique}(c), is covered from the amount of totally infected nodes.

%

\begin{figure*}[!tb]
\hfill
\begin{minipage}{0.32\textwidth}
  \epsfig{file=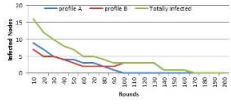,scale=0.4}
  \label{fig:clique00}
  \subcaption{$(0,0)$}
\end{minipage}
\hfill
\begin{minipage}{0.32\textwidth}
 \epsfig{file=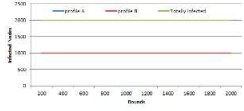,scale=0.4}
   \label{fig:cliquecN2}
   \subcaption{$(cN,cN)$}
\end{minipage}
\hfill
\begin{minipage}{0.32\textwidth}%
 \epsfig{file=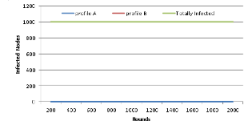,scale=0.4}
  \label{fig:clique1cN}
  \subcaption{$((1-c)N,cN)$}
\end{minipage}
\hfill
\caption{Simulations in a clique graph of $2000$ nodes. Figure (a) depicts the case where both profiles have low sensitivity against the virus. Figure (b) depicts the case where both profiles have high sensitivity against the virus and c=0.9. In Figure (c), profile $\mathcal{A}$ has low sensitivity against the virus while profile $\mathcal{B}$ has high sensitivity and c=0.9 }
\label{clique}
\end{figure*}

\subsubsection{Simulations for arbitrary graphs}
In the case of arbitrary graphs, the experiments were executed using the Enron email network that was described in the beginning of the experimental section. Using this network, we verify our theoretical results for the zero fixed point and we evaluate the existence of other fixed points in the case of two profiles. Similarly to the clique experiments, the dataset is divided to two profiles based on the size of the graph (half nodes in each profile) while the simulations last for $5000$ rounds. Here, as in all social contact graphs, we initially infect the top 5$\%$ of nodes according to their degree. It has been noticed that social graphs tend to create small highly connected subgraphs while weak connections exist between them. In order to effectively initialize the simulation, we chose to infect nodes that have relatively high degree.

To verify the zero fixed point, we used the following infection and healing parameter values:$(\beta_{\mathcal{A}},\delta_{\mathcal{A}})=(0.0009,0.5)$ and $(\beta_{\mathcal{B}},\delta_{\mathcal{B}})=(0.0005,0.7)$. In Figure \ref{emailenron_00}, we can observe that the system converges to zero in the first $12$ rounds, presenting the same behavior for the rest of the total $5000$ rounds. (We omit the rest of rounds in order to present the rapid convergence in the first rounds while for the rest part of the simulation, the result as expected is stable at zero).
 
\begin{figure}
\begin{center}
\epsfig{file=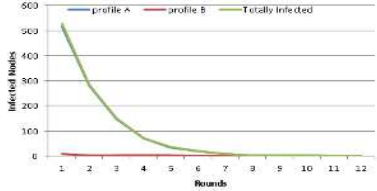,height=1in, width=2in}
\end{center}
\caption{Simulation where both profiles have low sensitivity against the virus in the Enron email network.}
 \label{emailenron_00}
\end{figure}

Next, we examine the case where both profiles have very high sensitivity against the virus and as a result, the network is flooded due to virus propagation. In this case, the infection and healing parameter values that were used, are $\beta_{\mathcal{A}}=\beta_{\mathcal{B}}=0.006$ and $\delta_{\mathcal{A}}=\delta_{\mathcal{B}}=0.0001$ for $ (\alpha,b)=(0.001,0.99) $. These parameter values satisfy the general conditions in Equations~\ref{eq:fixeda-new}, \ref{eq:fixedb-new} according to which, the node degree should be in the range $(1-1650)$. From the initial infection process, $1866$ nodes get infected and the initial status as well as the final status of the graph is depicted in Figure \ref{emailenron_full}.

\begin{figure*}[!tb]
\hfill
\begin{minipage}{0.32\textwidth}
  \epsfig{file=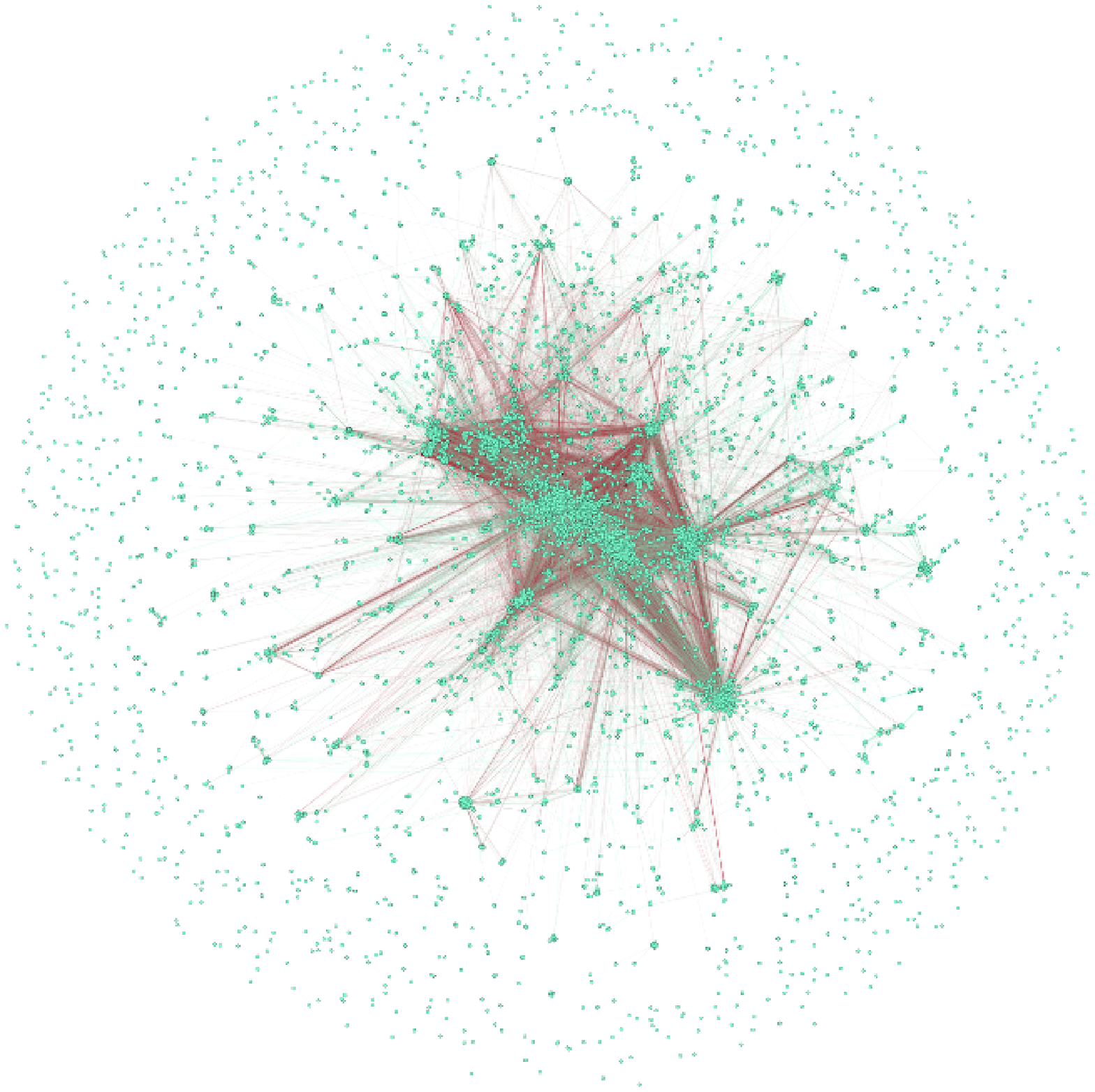,scale=0.3}
  \label{fig:init_infected}
  \subcaption{Initial Status}
\end{minipage}
\hfill
\begin{minipage}{0.32\textwidth}
 \epsfig{file=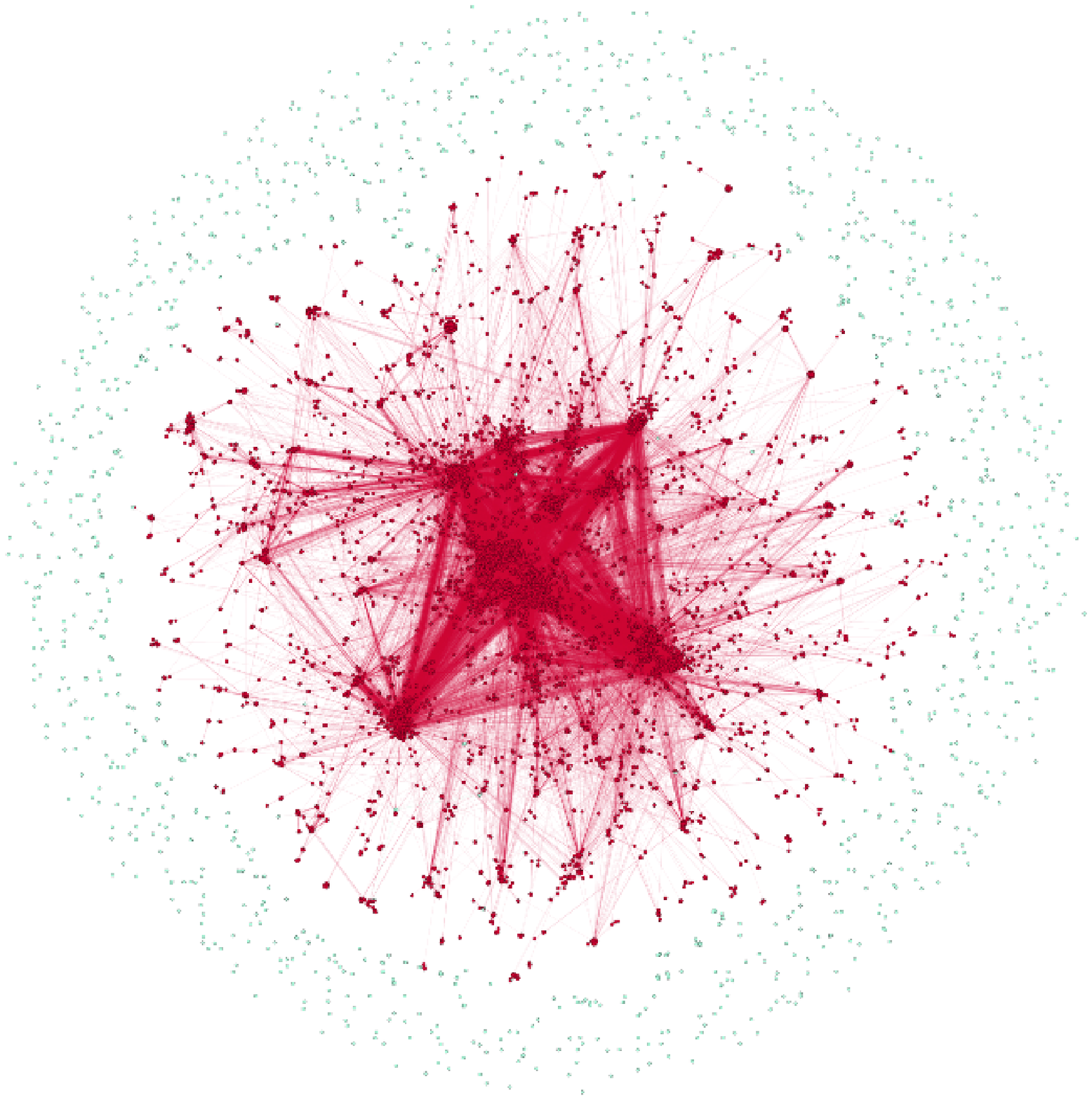,scale=0.3}
   \label{fig:fin_infected}
   \subcaption{Final Status}
\end{minipage}
\hfill
\caption{Visualization of the Enron email network infection process. In Figure (a) it is depicted the network after the infection initialization. Red nodes (with respective red edges) are infected while green nodes are susceptible. In Figure (b) it is depicted the final state where the network is flooded and the majority of nodes are infected. }
\label{emailenron_full}
\end{figure*}

Now we move to the case where profile $\mathcal{A}$ has high sensitivity against the virus while profile $\mathcal{B}$ presents low sensitivity against the virus. In this case, we expect that a small percentage of nodes in profile $\mathcal{B}$ will get infected whereas in profile $\mathcal{A}$ the majority of nodes will get infected. The parameter values used for this experiment are $(\beta_{\mathcal{A}},\delta_{\mathcal{A}} )=(0.006, 0.0001)$ and $(\beta_{\mathcal{B}},\delta_{\mathcal{B}} )=(0.009,0.1)$
 for $ (\alpha,b)=(0.001, 0.99) $. The results are depicted in Figure \ref{emailenron_1cN}.

\begin{figure}[ht]
\begin{center}
\epsfig{file=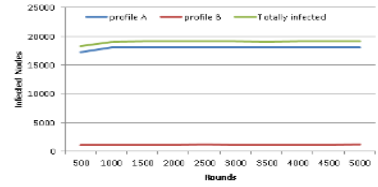,height=1in, width=2in}
\end{center}
\caption{Simulation in Enron email network where profile $\mathcal{A}$ has high sensitivity against the virus while profile $\mathcal{B}$ presents low sensitivity.}
 \label{emailenron_1cN}
\end{figure}

Last, we provide an example where we have randomly and equally divided the network in five profiles and we follow the same process like mentioned above (we infect the top 5\% of nodes according to their degree). The results are depicted in Figure \ref{emailenron_5}. In this case, we observe that profiles 0, 1, 2 are down to zero for all rounds of execution while profile 3 is steadily up to $\sim$ 580 infected nodes and profile 4 is up to $\sim$ 25580 nodes. Since all profiles present a steady behavior (they have the same amount of infected nodes for the majority of rounds), this can be considered as a stable fixed point. The fact that only two profiles from all five have infected nodes, is due to the fact that all top 5$\%$ nodes based on degree are, in those profiles (profile 3, profile 4). The healing and infection parameters used are: $(\beta_{\mathcal{0}},\delta_{\mathcal{0}} )=(0.006, 0.0001)$,$(\beta_{\mathcal{1}},\delta_{\mathcal{1}} )=(0.009,0.1)$,$(\beta_{\mathcal{2}},\delta_{\mathcal{2}} )=(0.006, 0.0001)$,$(\beta_{\mathcal{3}},\delta_{\mathcal{3}} )=(0.009,0.1)$ and $(\beta_{\mathcal{2}},\delta_{\mathcal{2}} )=(0.006, 0.0001)$
 for $ (\alpha,b)=(0.001, 0.99) $.

\begin{figure}[ht]
\begin{center}
\epsfig{file=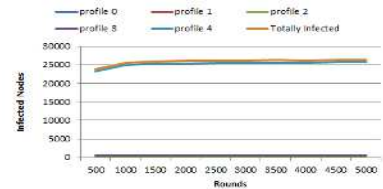,height=1in, width=2in}
\end{center}
\caption{Simulation in Enron email network where there are 5 profiles and profile 3 and profile 4 present  high sensitivity against the virus while profiles 0, 1, 2 present low sensitivity.}
 \label{emailenron_5}
\end{figure}

Most social networks, resemble powerlaw graphs. Consequently, we thought it would be wise to make a first attemp and have simulations in such graphs. A simple example, is depicted in Figure \ref{powerlaw}. The parameter values used in the creation of the graph are:$\alpha=2.72$ and $\beta=3000$. The infection and healing parameters used in the experiment are $(\beta_{\mathcal{A}},\delta_{\mathcal{A}} )=(0.5, 0.0001)$,$(\beta_{\mathcal{B}},\delta_{\mathcal{B}} )=(0.7,0.001)$. While the network is not flooded (only 530 nodes are infected from the total 5000 nodes), the system presents a steady behavior followed for the majority of execution rounds. In Figure \ref{powerlaw}, we present only the first 12 rounds of execution where the system converges up to 530 infected nodes. We omit the rest of rounds since the amount of infected nodes, remains the same and as a result, this is considered to be a stable fixed point.

\begin{figure}[ht]
\begin{center}
\epsfig{file=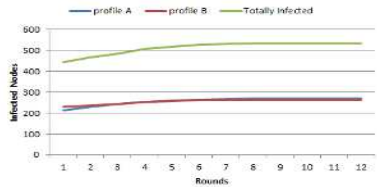,height=1in, width=2in}
\end{center}
\caption{Simulation in a powerlaw network where both profiles have high susceptibility against the virus but due to the structure of the graph, the virus propagation is limited.}
 \label{powerlaw}
\end{figure}

Finally, we present an alternative approach in profiling graphs, using the Montgomery network. Using the social contact graph file and the corresponding demographics file which are publicly available, we created a dataset file which consists of all interactions where besides all other characteristics, the age of the interacting nodes is included. Unfortunately, the available demographics dataset has missing values. For this reason, we omitted all interactions of nodes where the respective age of the node is not available. Consequently, the resulting graph consists of $67700$ nodes and $1626453$ edges. Using the age parameter, we divided nodes in five age groups: $1)$ where nodes are children below the age limit of 10 years, $2)$ where nodes correspond to children from the age of 10 till the age of 18, $3)$ where the nodes are adults in the range of (18-30) years, $4)$ where the nodes are adults in the range of (30-50) years and $5)$ where the nodes are adults with age $50+$.

With this profile initialization in the experiments, we infected 20 children in profile $1$ as a highly sensitive group against attacking virus and 1 in 100 humans in all other age profiles, resulting in 2942 initially infected nodes. The infection and healing parameter values that were used here are $\beta_{1}=\beta_{5}=0.9, \delta_{1}=\delta_{2}=\delta_{3}=\delta_{4}=\delta_{5}=0.01$ and $\beta_{2}=\beta_{3}=\beta_{4}=0.6$ for $(\alpha,b)=(0.001,0.99)$. The parameter values that were chosen, satisfy the assumption that children under the age of $10$ are a highly sensitive social group against the virus. The results are depicted in Figure~\ref{montgomery}.

\begin{figure}[h]
\begin{center}
\epsfig{file=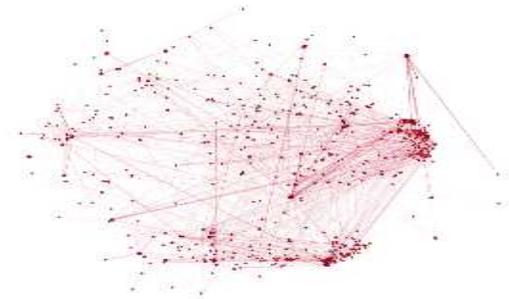,height=2in, width=3in}
\end{center}
\caption{Visualization of the Montgomery network infected nodes, after $5000$ simulation rounds using age profiles. The red nodes represent children under the age of $10$. The blue nodes represent adults in the age range $(18-30)$, the green nodes represent adults in the age range $(30-50)$ and yellow nodes represent adults with 50+ age.}
 \label{montgomery}
\end{figure}

One could notice that even though the parameter values that were chosen, were very high, the amount of infected nodes in Figure \ref{montgomery} is very small. This is due to the topology of the graph and the corresponding connectivity. While the maximum degree is $378$, most nodes have degree in the range $(1-10)$. In Figure \ref{montgomery}, we provide a visualization of the finally infected nodes after $5000$ rounds simulation.

\section{Discussion} \label{sec:discuss}

Here we discuss the results we presented in the previous sections and their possible extensions. The main characteristic of our setting is that there are infinite fixed points based on the relationship between the various parameters of the problem. This is in contrast to the finite and small number of fixed points in the case of two viruses \cite{Prakash:2012:WTC:2187836.2187975}. One could erroneously think that having two profiles in the network is like having two viruses but the truth is that the introduced heterogeneity of the underlying network adds complexity to the problem of finding the necessary stability conditions for the fixed points.

\paragraph{Clique} We have given conditions so that a virus in presence of two different profiles in the network will die out. We also gave conditions so that a particular number of nodes will get infected from each profile, thus connecting the footprint of the virus in the profiles with the parameters of the profiles with respect to the virus. Of course, we have provided such results for particular interesting cases since tackling the general case seems much harder. It is a matter of messy computations to do the same for a barbell graph with uniform weights on the edges between its two cliques. However, adding arbitrary weights on the barbell graph or even on the clique requires a more general approach where the characteristics of the adjacency matrix must be taken into account. 

\paragraph{Arbitrary Graph} 
In this case, we have provided a general condition in \ref{sssec:zero} so that the virus will die out or persist in the network. In case the virus persist, we prove conditions that should hold for the graph so that the steady-state infection probability for each node is within some prespecified range. Although this is a useful result, it is not the whole story. This is because we impose that the probabilities of \textbf{all} nodes should be within this range. As a result, we fail to catch the case where most of the nodes are within this range but there are some nodes with probabilities that are outside this range. For example, imagine a clique $K_n$ and a path $P_n$ of $n$ nodes respectively so that the path $P_n$ hangs from some node in $K_n$ creating a graph of $2n$ nodes in total. It is expected that nodes in $P_n$ will have lower probabilities than those in $K_n$ and thus some of them may be out of the prespecified range. To tackle these cases one needs to fully solve the respective dynamical system.

\paragraph{Profiling}
How does one specify the profile of a node in a given network that captures the relationships between agents within a particular framework? Take for example an epidemiological scenario where the virus is the flu. The network specifies the contact between people during a day. It is known that there are groups that are more susceptible to the virus than other groups of people (e.g., children and adults). In this case, one would propose to specify profiles based on the age of nodes (as we have done in one of our experiments).  An interesting approach is presented in \cite{aral} where profiling in a social experiment, shed light in potentially influential users. In a social network scenario, one could also specify the affinity towards a particular rumor or idea (e.g., a PS4 game) by looking at relative historical data of each agent and then decide whether each agent is more susceptible or less susceptible to this particular rumor or idea (or even class of rumors and ideas). However, there is still the problem of giving a value that describes the affinity of each agent. This can either be the choice of the researcher or can be accomplished by using a classifier working on relative historical data, if there is such data of course. Summarizing, we feel that an empirical study of such an extent would be very interesting and it would be a different and surely an interesting paper.

\section{Conclusions} \label{conclusion}
In this paper, we studied the case where one competing virus/rumor/product is spreading over a heterogeneous network. In this network, the nodes have different endurance against the "virus" and we answer the question of what will happen in the end by providing the necessary conditions so that the system will reach a steady state. We proved for different scenarios, the fixed points the system can reach and the stability conditions that are required. Our main results concern the clique and arbitrary topologies. We also verified the theoretical analysis with simulation experiments on synthetic and real-world datasets. 

Future directions include the extension of this work to other virus propagation models as well as the study of multiple profiles and multiple viruses on a single network. The theoretical analysis of such a case may be very difficult using tools from dynamical systems theory but we feel that a more algorithmic approach may bear fruits (algorithmic analytical tools for Natural Algorithms e.g., \cite{conf:innovations:Chazelle10}). A more extensive experimental evaluation will be included in the journal version.

\section{Acknowledgments}
The source code for the experiments can be found at \url{http://alkistis.ceid.upatras.gr/research/kdd15/}.
%
\bibliographystyle{abbrv}
\bibliography{virus}

\begin{thebibliography}{10}

\bibitem{anderson:humaninf}
R.~M. Anderson and R.~M. May".
\newblock {\em "Infectious Diseases of Humans"}.
\newblock "Oxford University Press", "1991".

\bibitem{aral}
S.~Aral and D.~Walker.
\newblock Identifying influential and susceptible members of social networks.
\newblock {\em Science}, 337, 2012.

\bibitem{bailey:infectious}
N.~Bailey".
\newblock {\em "The Mathematical Theory of Infectious Diseases and its
  Applications"}.
\newblock "Griffin", "London", "1975".

\bibitem{barbieri}
N.~Barbieri, F.~Bonchi, and G.~Manco.
\newblock Topic-aware social influence propagation models.
\newblock In {\em 12th {IEEE} International Conference on Data Mining, {ICDM}
  2012, Brussels, Belgium, December 10-13, 2012}, pages 81--90, 2012.

\bibitem{Beutel:2012:IVN:2339530.2339601}
A.~Beutel, B.~A. Prakash, R.~Rosenfeld, and C.~Faloutsos.
\newblock Interacting viruses in networks: Can both survive?
\newblock In {\em Proceedings of the 18th ACM SIGKDD International Conference
  on Knowledge Discovery and Data Mining}, KDD '12, pages 426--434, New York,
  NY, USA, 2012. ACM.

\bibitem{conf:innovations:Chazelle10}
B.~Chazelle.
\newblock Analytical tools for natural algorithms.
\newblock In A.~C.-C. Yao, editor, {\em ICS}, pages 32--41. Tsinghua University
  Press, 2010.

\bibitem{domingos}
P.~Domingos and M.~Richardson.
\newblock Mining the network value of customers.
\newblock In {\em Proceedings of the seventh ACM SIGKDD international
  conference on Knowledge discovery and data mining}, KDD '01, pages 57--66,
  New York, NY, USA, 2001. ACM Press.

\bibitem{lotka}
J.~G.~E. Francisco J.~Ayala, Michael E.~Gilpin.
\newblock Competition between species:theoretical models and experimental
  tests.
\newblock {\em Theoretical Population Biology}, 4:331--356, 1973.

\bibitem{ganesh}
A.~Ganesh, L.~Massoulie, and D.~Towsley.
\newblock The effect of network topology on the spread of epidemics.
\newblock In {\em INFOCOM}, 2005.

\bibitem{gould2010statistical}
H.~Gould and J.~Tobochnik.
\newblock {\em Statistical and Thermal Physics: With Computer Applications}.
\newblock Princeton University Press, 2010.

\bibitem{granovet}
M.~Granovetter.
\newblock Threshold models of collective behavior.
\newblock {\em Am. Journal of Sociology}, 6(83):1420--1443, 1978.

\bibitem{principle}
G.~Hardin.
\newblock The competitive exclusion principle.
\newblock {\em Science}, 131(3409):1292--1297, April 29,1960.

\bibitem{Hethcote:2000:MID:364897.364915}
H.~W. Hethcote.
\newblock The mathematics of infectious diseases.
\newblock {\em SIAM Rev.}, 42(4):599--653, Dec. 2000.

\bibitem{hirsch:differential}
M.~W. Hirsch and S.Smale".
\newblock {\em "Differential Equations, Dynamical Systems and Linear Algebra"}.
\newblock "Academic Press", "1974".

\bibitem{DBLP:conf:kdd:KempeKT03}
D.~Kempe, J.~M. Kleinberg, and {\'E}.~Tardos.
\newblock Maximizing the spread of influence through a social network.
\newblock In {\em KDD}, pages 137--146, 2003.

\bibitem{kephart:virus}
J.~O. Kephart and S.~R. White.
\newblock Measuring and modeling computer virus prevalence.
\newblock In {\em IEEE Computer Society Symposium on Research in Security and
  Privacy}, 1993.

\bibitem{kimura}
M.~Kimura, K.~Saito, and H.~Motoda.
\newblock Efficient estimation of influence functions for {SIS} model on social
  networks.
\newblock In {\em {IJCAI} 2009, Proceedings of the 21st International Joint
  Conference on Artificial Intelligence, Pasadena, California, USA, July 11-17,
  2009}, pages 2046--2051, 2009.

\bibitem{emailenron}
J.~Leskovec, K.~J. Lang, A.~Dasgupta, and M.~W. Mahoney.
\newblock Community structure in large networks: Natural cluster sizes and the
  absence of large well-defined clusters.
\newblock {\em Internet Mathematics}, 6(1):29--123, 2009.

\bibitem{montgomery}
E.~S.~G. Lum~Kristian, Swarup~Samarth and H.~James.
\newblock The contagious nature of imprisonment: An agent-based model to
  explain racial disparities in incarceration rates.
\newblock {\em J. R. Soc. Interface}, 11(98):201404090, June 2014.

\bibitem{McKendrick:1926}
A.~G. McKendrick.
\newblock Applications of mathematics to medical problems.
\newblock In {\em Proceedings of Edin. Math. Society}, volume~14, pages
  98--130, 2011.

\bibitem{pastor}
R.~Pastor-Santorras and A.~Vespignani.
\newblock Epidemic spreading in scale-free networks.
\newblock {\em Physical Review Letters}, 86(14), 2001.

\bibitem{Prakash:2012:WTC:2187836.2187975}
B.~A. Prakash, A.~Beutel, R.~Rosenfeld, and C.~Faloutsos.
\newblock Winner takes all: Competing viruses or ideas on fair-play networks.
\newblock In {\em Proceedings of the 21st International Conference on World
  Wide Web}, WWW '12, pages 1037--1046, New York, NY, USA, 2012. ACM.

\bibitem{Prakash:2011:TCA:2117684.2118312}
B.~A. Prakash, D.~Chakrabarti, M.~Faloutsos, N.~Valler, and C.~Faloutsos.
\newblock Threshold conditions for arbitrary cascade models on arbitrary
  networks.
\newblock In {\em Proceedings of the 2011 IEEE 11th International Conference on
  Data Mining}, ICDM '11, pages 537--546, Washington, DC, USA, 2011. IEEE
  Computer Society.

\bibitem{competition}
R.~A.~A. Richard~McGehee.
\newblock Some mathematical problems concerning the ecological principle of
  competitive exclusion.
\newblock {\em Journal of Differentials equations}, 23:30--52, 1977.

\bibitem{saito}
K.~Saito, M.~Kimura, and H.~Motoda.
\newblock Discovering influential nodes for {SIS} models in social networks.
\newblock In {\em Discovery Science, 12th International Conference, {DS} 2009,
  Porto, Portugal, October 3-5, 2009}, pages 302--316, 2009.

\bibitem{competingmemes}
X.~Wei, N.~Valler, B.~A. Prakash, I.~Neamtiu, M.~Faloutsos, and C.~Faloutsos.
\newblock Competing memes propagation on networks: A case study of composite
  networks.
\newblock {\em IEEE Journal on Selected Areas in Communications},
  6(31):1049--1060, 2013.

\end{thebibliography}
%
%
\appendix
\section{Proof of Positive Discriminant for Fixed Point $I_{\mathcal{A}},I_{\mathcal{B}}\rightarrow 0$ in a Clique} \label{app:pos_discr}
The discriminant is:
\[\Delta=(\delta_{\mathcal{A}}+\delta_{\mathcal{B}}-N(\beta_{\mathcal{A}}+\beta_{\mathcal{B}}))^2-4\delta_{\mathcal{A}}\delta_{\mathcal{B}}+4N(\delta_{\mathcal{B}}\beta_{\mathcal{A}}+\delta_{\mathcal{A}}\beta_{\mathcal{B}})\]
which results in:

\noindent $\Delta=\delta^2_{\mathcal{A}}+\delta^2_{\mathcal{B}}-2\delta_{\mathcal{A}}\delta_{\mathcal{B}}+N^2(\beta_{\mathcal{A}}+\beta_{\mathcal{B}})^2-2N(\delta_{\mathcal{A}}+\delta_{\mathcal{B}})(\beta_{\mathcal{A}}+\beta_{\mathcal{B}})+4N(\delta_{\mathcal{B}}\beta_{\mathcal{A}}+\delta_{\mathcal{A}}\beta_{\mathcal{B}})$

The first three terms can be written equivalently either as $(\delta_{\mathcal{A}}-\delta_{\mathcal{B}})^2$ or $(\delta_{\mathcal{B}}-\delta_{\mathcal{A}})^2$. Assuming that $\delta_{\mathcal{B}}>\delta_{\mathcal{A}}$ we get that the discriminant can be written as:
\[\Delta=((\delta_{\mathcal{B}}-\delta_{\mathcal{A}})-N(\beta_{\mathcal{A}}+\beta_{\mathcal{B}}))^2+4N\beta_{\mathcal{A}}(\delta_{\mathcal{B}}-\delta_{\mathcal{A}})>0
\]
and for the symmetric case where $\delta_{\mathcal{A}}>\delta_{\mathcal{B}}$ we get an equivalent formula for the discriminant which clearly shows that it is also always positive.
\[\Delta=((\delta_{\mathcal{A}}-\delta_{\mathcal{B}})-N(\beta_{\mathcal{A}}+\beta_{\mathcal{B}}))^2+4N\beta_{\mathcal{B}}(\delta_{\mathcal{A}}-\delta_{\mathcal{B}})>0
\]
Finally, if $\delta_{\mathcal{A}}=\delta_{\mathcal{B}}$ the discriminant is also positive and equal to $N^2(\beta_{\mathcal{A}}+\beta_{\mathcal{B}})^2$.

\end{document}